\newtheorem{proposition}{Proposition}
\newcommand{\onecolumnfigurewidth}{0.95\linewidth}
\def\ie{\emph{i.e}.}
\newcounter{mytempeqcounter}
\newcommand{\bigformulatop}[2]{%
  \begin{figure*}[!t]
    \normalsize
    \setcounter{mytempeqcounter}{\value{equation}}
    \setcounter{equation}{#1}
    #2

    \setcounter{equation}{\value{mytempeqcounter}}
    \hrulefill
    \vspace*{4pt}
  \end{figure*}
}
\newcommand{\Hsr}{h_{s,r_l}}
\newcommand{\Hrd}{h_{r_l,d}}
\newcommand{\YR}{y_{s,r_l}}
\newcommand{\YD}{y_{r_l,d}}
\newcommand{\AuthorOne}{Ashkan Kalantari}
\newcommand{\AuthorTwo}{Mohammadali Mohammadi}
\newcommand{\AuthorFour}{Mehrdad Ardebilipour}
\definecolor{light-gray}{gray}{0.65}
\newcommand{\ThankOne}{Author1, Author2, Author3 Author4 and Author5 are with
. email:\{email1, email2, email3, email4\}@email.com.}
\title{Performance Analysis of Opportunistic Relaying Over Imperfect Non-identical Log-normal Fading Channels}
\author{\authorblockN{\AuthorOne,\:\AuthorTwo,\:and\:\AuthorFour\thanks{\ThankOne}}
K.~N.~Toosi University of Technology, Tehran, Iran,\\
Email:\texttt{ \{A.Kalantari; M.A.Mohammadi\}@ee.kntu.ac.ir},
\texttt{ Mehrdad@eetd.kntu.ac.ir} }
\begin{document}
\maketitle

\begin{abstract}

Motivated by the fact that full diversity order is achieved using
the "best-relay" selection technique, we consider opportunistic
amplify-and-forward and decode-and-forward relaying systems. We
focus on the outage probability of such a systems and then derive
closed-form expressions for the outage probability of these systems
over independent but non-identical imperfect Log-normal fading
channels. We consider the error of channel estimation as a Gaussian
random variable. As a result the estimated channels distribution are
not Log-normal either as would be in the case of the Rayleigh fading
channels. This is exactly the reason why our simulation results do
not exactly matched with analytical results. However, this
difference is negligible for a wide variety of situations.

\end{abstract}

\section{Introduction}

Cooperative communication increases the performance quality of
communication systems in terms of capacity, outage probability and
symbol error probability (SEP) dramatically. Two main relaying
protocols that have been researched a lot, are amplify-and-forward
(AF) and decode-and-forward
(DF)~\cite{Laneman:IT:Sep:2003,Laneman:IT:Dec:2004}.
In~\cite{Hasna:COM:2003,Hasna:WCOM:2003,Hasna:COM:2004,Kaveh:WCOM:2004}
performance of AF and DF protocols in terms of outage probability
and SEP have been widely investigated over Nakagami-\emph{m} and
Rayleigh fading channels.

In cooperative communication networks, the use of multiple relays to
facilitate the source-destination communication was proposed to
increase the spatial diversity
gain~\cite{Ribeiro:WCOM:2005,Ikkii:Wcom:2009}. To avoid the
interfering, source and all the relay transmissions must take place
on orthogonal channels. Thus multiple relay cooperation is
considered inefficient in terms of channel resources and bandwidth
utilization. To overcome this problem, opportunistic relaying (OR)
has been proposed at which only the best-relay from a set of $L$
available relays is selected to participate in
communication~\cite{Bletesas:JSAC:2006,Bletsas:WCOM:2010}. It was
shown that OR achieves full diversity order. With this technique,
the selection strategy is to choose the relay with the best
equivalent end-to-end channel gain which is obtained as the highest
minimum of the channel gains of the first and the second hops under
DF protocol or with the best harmonic mean of both channel gains
under AF protocol~\cite{TRUNG:COML:2009,Yi:COML:2006}.

In~\cite{Kostic:2005,Uysal:WCOm:2008,Renzo:Com:2010}
performance of Log-normal fading channels over different structures
and relaying protocols has been investigated. From the practical
point of view, Log-normal distribution is encountered in many
communication scenarios. For instance, when indoor communication is
used at which users are moving, Log-normal distribution not only
models the moving objects, but also the reflection of the bodies.
Moreover, it models the action of communicating with robots in a
closed environment like a factory~\cite{Renzo:Glob:2008}. In indoor
radio propagation environments, terminals with low mobility have to
rely on macroscopic diversity to overcome the shadowing from indoor
obstacles and moving human bodies. Indeed, in such slowly varying
channels, the small-scale and large-scale effects tend to get mixed.
In this case, Log-normal statistics accurately describe the
distribution of the channel path gain~\cite{Alouini:Com:2002}.

A main underlying assumption in majority of the current literature
on cooperative communication is the availability of the channel
state information (CSI) at the receiver. Recently there has been an
interest in evaluation the performance of relay networks over
imperfect channels
\cite{patel:JVT:2006jan,SEu:TVT:2009,Seify:Glob:2010}. Performance
analysis of opportunistic AF and DF schemes over Log-normal channels
is a non-trivial task when the CSI is imperfectly known at all
nodes. To the best of our knowledge there have been no reported
results on the outage probability analysis of such a systems yet.
The main contribution of this paper is to derive closed-form
expressions for the outage probability of the multi-relay DF and AF
systems, employing ``best-worst'' and ``best harmonic mean'' relay
selection criteria over imperfect Log-normal fading channels,
respectively.

The rest of the paper is organized as follows:
Section~\ref{sec:System_Model} describes the system model.
Section~\ref{sec:Model Estim}, discusses the Gaussian error model
for imperfect channel estimation. We take advantage of this model in
our relay selection scenario to obtain output instantaneous SNR.
Section~\ref{sec:Perfo Analy} provides the harmonic mean of two
Log-normal random variables (R.V) for AF and the equivalent CDF of
the best-worse selection criterion for DF in order to calculate
outage probability. Section~\ref{sec:Num Res} presents simulation
results, while Section~\ref{sec:con} provides some concluding
remarks.

\section{System Model}
\label{sec:System_Model}

We consider a multi-relay scenario, in which a source node (S)
communicates to a  destination node (D) via  multiple fixed relays $
(R_l, l=1,...,L)$. We assume that there is no direct link between
the source and the destination, and communication occurs using a
two-hop protocol over two time slots\cite{Laneman:IT:Dec:2004}. The
fading coefficient over source to relay $l$, and relay $l$ to
destination are denoted with $\Hsr$ and $\Hrd$ which are assumed to
be independent and non-identically distributed Log-normal R.Vs.
Dropping the indexes, channel gains during the transmission of a bit
are modeled by $h=10^{0.1X}$, at which $X\sim
\mathcal{N}(\mu,\sigma^2)$ and $\mathcal{N}$ denotes a Gaussian
distribution. During the first time slot, source broadcasts the
signal to $L$ relays. In the second time slot, only the best-relay
forwards the signal to the destination, and source remains idle. Let
us denote with $P_s$ the power transmitted by the source and thus
the set of below equations summarize the operation taking place for
each symbol
\begin{align}
  \YR &= \sqrt{P_s} x \Hsr+n_{r_l} \quad l=1,\cdots,L \label{eqn:S_R_signal}\\
  \YD &=  x_r \Hrd+n_d,\label{eqn:R_D_signal}
\end{align}
where $x$ is the transmitted signal with power $E[\vert x \vert
^2]=1$, 
and $n_{r,l}$ and $n_d$ are complex additive white Gaussian noise
(AWGN) in the relay and destination, respectively. Without loss of
generality, we assume that all the AWGN terms have equal variance as
$N_0$. For AF relaying $x_r=A_l\YR$ at which relay amplification
factor, $A_l$, is chosen to satisfy an average power constraint and
will be defined later. For DF relaying $x_r=\sqrt{P_r}\hat{x}$,
where $\hat{x}$ is obtained after demodulating $\YR$ followed by
modulating for retransmit to the destination. Therefore, in brief we
study the OR~\cite{Bletesas:JSAC:2006} with two conventional
relaying strategies at the relay:
\begin{itemize}
\item \textbf{DF}: The best-relay decodes the
message, re-encodes it and transmits that message in next time slot.
\item \textbf{AF} : In the second time slot the best-relay process  the received signal and forwards it to the destination.
\end{itemize}

In the sequel we investigate the performance of OR-DF and OR-AF
schemes over Log-normal fading channels where relay and destination
are provided with a estimation of their corresponding channels.

\subsection{Transmission with DF Protocol}

In the second time slot, for DF scheme, only the best-relay
according to the best-worse criterion~\cite{Bletesas:JSAC:2006} is
chosen to decode and re-encode the received signal which yields
$\hat{x}$. Then the selected $l^{th}$ relay send
$x_r=\sqrt{P_r}\hat{x}$ to the destination where $P_r$ is the relay
power. As a result the received signal at the destination is given
by
\begin{align}
  \YD &=  \sqrt{ P_r }\hat{ x } \Hrd+n_d,\label{eqn:R_D_DF}
\end{align}

\subsection{Transmission with AF Protocol}

Under AF protocol the relay with the best harmonic mean of both
source to relay and relay to destination gains is chosen to forward
$x_r=A_l\YR$ to destination~\cite{Hasna:WCOM:2003}. Note that $A_l=
\sqrt{\frac{ P_r }{P_s \vert \hat{h}_{s,r_l} \vert^2 + N_0}}$,
asserts the relay amplification factor which controls the output
power of the relay~\cite{Laneman:IT:Sep:2003}. Since $A_l$ depends
on the fading coefficient, each relay has to estimate its own
received channel. We assume that relays estimate their corresponding
channels, $\hat{h}_{s,r_l}$, and then use it to amplify the received
signal. The received signal at the destination is of the form
\begin{align}
  \YD &=  A_l\YR \Hrd+n_d \nonumber\\
  &=A_l\Hsr\Hrd x+ A_l\Hrd n_{r_l}+n_d.\label{eqn:R_D_AF}
\end{align}
Armed with these system models, in the consecutive sections we model
the imperfect CSI at the receiving nodes, and then study the performance
of aforementioned AF and DF schemes with relay selection, in term of
outage probability.

\section{Instantaneous SNR with Imperfect CSI}
\label{sec:Model Estim}

We denote the estimated and exact channel coefficients as $\hat{h}$
and $h$, respectively. To estimate the h linearly with respect to $\hat{h}$, we
employ the following model~\cite{Gu:2003}
\begin{align}
  h =\rho\hat{h}+e,\label{HHa}
\end{align}
where $e$ is the channel estimation error modeled by zero mean
complex Gaussian distribution with variance $\sigma_e^2$ and $\rho$
is the correlation coefficient between $h$ and $\hat{h}$ which is
given by $\rho=\frac{\sigma_e^2}{\sigma_{\hat{h}}^2}$~\cite{Gu:2003}. The
variances of the error and the exact channel coefficient are related
by $\rho$ as
\begin{align}
 \sigma^2_e=(1-\rho)\sigma^2_h.\label{eqn:Poutage}
\end{align}
 Since estimated channel is the combination of a Log-normal and a complex Gaussian R.V, it's
 pdf does not exactly follow a Log-normal distribution. However, in Section~\ref{sec:Num Res}
 we will show that when the correlation coefficient between the estimated channel and the real one  is near to
 one (\ie, $\rho\approx 1$), this approximation is acceptable. By employing the imperfect channel estimations at the receiving
 node we obtain a closed-form expressions for instantaneous SNR at the destination.

\subsection{Transmission with DF Protocol}

Substituting \eqref{HHa} into \eqref{eqn:S_R_signal} and \eqref{eqn:R_D_signal}, we have
\begin{align}
  \YR\! \!&=\!\! \sqrt{P_s}x\rho_{s,r_l}\hat{h}_{s,r_l}\!\!+\!\sqrt{P_s} x e_{s,r_l}\!\! + \!n_{r_l}\quad l=1,\cdots,L \label{eqn:S_R_signal estimate}\\
  \YD &=  x_r \rho_{r_l,d}\hat{h}_{s,r_l}+ x_r e_{r_l,d} + n_d.\label{eqn:R_D_signal estimate}
\end{align}
After receiving the signal in the $l^{th}$ relay, since the noise
power is not the same on all sub-channels, each diversity branch has
to be weighted by its corresponding complex fading gain over total
noise power on that particular branch. Therefore, the selected relay
and the destination will decode the received signal using
MRC as~\cite{Kaveh:WCOM:2004} 
\begin{align}
  \hat{y}_{s,r_l} = \frac{{\hat{h}}^{\ast}_{s,r_l}}{N_0}\YR, \quad  \hat{y}_{r_l,d} = \frac{{\hat{h}}^{\ast}_{r_l,d}}{N_0}\YD, \label{eqn:SNR_MRC}
\end{align}
then using~\eqref{eqn:SNR_MRC}, instantaneous SNR at the relay and destination can be formulated as
\begin{align}
    &\hat{\gamma}_{s,r_l}^{eff}=\tau_{s,r_l} \frac{P_s}{N_0} \vert \hat{h}_{s,r_l} \vert^2=\tau_{s,r_l} \bar{\gamma}_{s,r_l} \vert \hat{h}_{s,r_l} \vert^2=
    \tau_{s,r_l}\hat{\gamma}_{s,r_l},\\
    &\hat{\gamma}_{r_l,d}^{eff}\!=\!\tau_{r_l,d} \frac{P_r}{N_0} \vert \hat{h}_{r_l,d} \vert^2=\tau_{r_l,d} \bar{\gamma}_{r_l,d} \vert \hat{h}_{r_l,d} \vert^2=  \tau_{r_l,d}\hat{\gamma}_{r_l,d},\label{eqn:S_R_signal SNR eff}
\end{align}
where $\tau_{s,r_l}=\frac{\rho_{s,r_l}^2 \vert
\hat{h}_{s,r_l} \vert^2}{1+\frac{P_s \sigma_{e_{s,r_l}}^2}{N_0}}$ and $\tau_{r_l,d}=\frac{\rho_{r_l,d}^2 \vert \hat{h}_{r_l,d}
\vert^2}{1+\frac{P_s \sigma_{e_{r_l,d}}^2}{N_0}}$.
\subsection{Transmission with AF Protocol}

Remembering that our transmission model for AF scheme is given
by~\eqref{eqn:R_D_AF}, substituting \eqref{HHa} into
\eqref{eqn:R_D_AF}, the received signal in the destination is
\begin{align}
\YD &=\left[\sqrt{ P_s } A_l x (
\rho_{s,r_l}\rho_{r_l,d}\hat{h}_{s,r_l}\hat{h}_{r_l,d})\right]
  +\nonumber\\
  &\quad~ \left[{\sqrt{ P_s }A_l x (\rho_{s,r_l} \hat{h}_{s,r_l} e_{s,r_l}+
  \rho_{r_l,d} \hat{h}_{r_l,d} e_{r_l,d} +
  e_{s,r_l} e_{r_l,d} )+}\right.\nonumber\\
  &\left.\quad ~{ A_l\rho_{r_l,d} \hat{h}_{r_l,d} e_{r_l,d}}\right] + \left[{A_l\rho_{r_l,d} \hat{h}_{r_l,d} n_r + n_d }\right], \label{eqn:R_D_AF S E N}
\end{align}
where, the first term presents the received signal, the second and
third terms stand for the error signal, and the overall noise at the
destination is $\tilde{n}_d\triangleq {A_l\rho_{r_l,d}
\hat{h}_{r_l,d}n_r + n_d}$ which is a complex Gaussian R.V with
$\tilde{n}_d\sim \mathcal{N}(0,\sigma_{\tilde{n}}^2)$ where
$\sigma_{\tilde{n}}^2$ is
\begin{align}
\sigma_{\tilde{n}}^2 = N_0 \left( 1 + \frac{P_r \rho_{r_l,d} \vert
\hat{h}_{r_l,d} \vert^2 }{P_s \vert \hat{h}_{s,r_l} \vert^2 + N_0}
\right) .\label{eqn:n_tot var }
\end{align}
Using MRC at the input of destination, the estimated signal
is\vspace{-0.2em}
\begin{align}
\hat{y}_{r_l,d}=\frac{\hat{h}^\ast_{s,r_l}
\hat{h}^\ast_{r_l,d}}{\sigma_{\tilde{n}}^2}\YD. \label{eqn:MRC AF }
\end{align}
Supposing that $n_r$, $n_d$, $e_{s,r_l}$ and $e_{r_l,d}$ are
processes that are independent from each other, the instantaneous
SNR at the destination is obtained as~\eqref{eqn:SNRdestAF}, at the
top of the next page. In order to have a more tractable form, we
neglect $N_0^2( \frac{P_s}{N_0}\frac{P_r}{N_0}\sigma_{e_{s,r_l}}^2
\sigma_{e_{r_l,d}}^2 + 1)$ in \eqref{eqn:SNRdestAF}. So, we can
further simplify \eqref{eqn:SNRdestAF} as \vspace{-0.3em}
\bigformulatop{14}{ \vskip-0.5cm
\begin{align}
\hat{\gamma}_{d,l}^{eff}=\frac{ P_s P_r \rho_{s,r_l} \rho_{r_l,d}  \vert
\hat{h}_{r_l,d} \vert^2 \vert \hat{h}_{s,r_l} \vert^2 }{N_0^2\left(
\frac{P_s}{N_0} \vert \hat{h}_{s,r_l} \vert^2 (1+\rho_{s,r_l}^2
\frac{P_r}{N_0}) \sigma_{e_{r_l,d}}^2 + \frac{P_r}{N_0} \vert
\hat{h}_{r_l,d} \vert^2 (1+\rho_{r_l,d}^2 \frac{P_s}{N_0})
\sigma_{e_{s,r_l}}^2 + \frac{P_s}{N_0}  \sigma_{e_{s,r_l}}^2
\frac{P_r}{N_0} \sigma_{e_{r_l,d}}^2 + 1  \right)}.\label{eqn:SNRdestAF}
\end{align}
}
\setcounter{equation}{15}
\begin{align}
\hat{\gamma}_{d,l}^{eff}=\frac{ \rho_{s,r_l}^2 \rho_{r_l,d}^2 \hat{\gamma}_{s,r_l} \hat{\gamma}_{r_l,d}  }
{ \hat{\lambda}_{s,r_l} \hat{\gamma}_{s,r_l} + \hat{\lambda}_{r_l,d} \hat{\gamma}_{r_l,d} }
,\label{eqn:SNR D AF Short}
\end{align}
where $\lambda_{s,r_l}\!\!=\!1 \!+\! \rho_{s,r_l}^2
\epsilon_{r_l,d}$, $\lambda_{r_l,d}\!=\!1 \!+ \rho_{r_l,d}^2
\epsilon_{s,r_l}$, $\hat{\gamma}_{s,r_l}\!=\!\frac{P_s \vert
\hat{h}_{s,r_l} \vert^2}{N_0}$,
$\hat{\gamma}_{r_l,d}\!\!=\!\!\frac{P_r \vert \hat{h}_{r_l,d}
\vert^2}{N_0}$, $\epsilon_{s,r_l}\!\! =\!\! \frac{P_s
\sigma_{e_{s,r_l}}^2 }{N_0}$, and $\epsilon_{r_l,d} \!\!=\!\!
\frac{P_r \sigma_{e_{r_l,d}}^2 }{N_0}$.

\section{Performance Analysis}
\label{sec:Perfo Analy}

In this section we derive closed-form expressions for outage
probability of the OR scheme under AF and DF protocols. Outage
probability is defined as the probability that the instantaneous SNR
at the receiver, $\gamma$, falls below a predetermined protection
ratio, $\gamma_{th}$, namely
\begin{align}
      P_{out}=P[\gamma\leq\gamma_{th}]=\int_{0}^{\gamma_{th}}
      f_{\Upsilon}(\gamma)d\gamma.\label{P_out}
\end{align}
where $f_{\Upsilon}(\gamma)$ represent the pdf of the instantaneous
SNR. It can readily be seen that the outage probability is actually
the cumulative distribution function (CDF) of $\gamma$ evaluated at
$\gamma_{th}$. Before proceeding, we introduce following theorems
from~\cite{Papoulis:1984} which will be used in the sequel to derive
$f_{\Upsilon}(\gamma)$.

Theorem 1: If X and Y are two R.Vs with relation $Y=mX$, then
$f_Y(\gamma)=\frac{1}{m}f_X(\frac{\gamma}{m})\label{Y=MX}$.

Theorem 2: If $X$ is a Log-normal R.V with distribution
$\textit{X}\sim\textit{LogN}(\mu_X,\sigma_X^2)$, then
$Y=\frac{1}{X}$ is a Log-normal R.V distributed as
$\textit{Y}\sim\textit{LogN}(-\mu_X,\sigma_X^2)$.

Theorem 3: If $X$ is a Log-normal R.V with distribution
$\textit{X}\sim\textit{LogN}(\mu_X,\sigma_X^2)$, then $Y=X^2$ is a
Log-normal R.V with distribution defined as
$\textit{Y}\sim\textit{LogN}(2\mu_X,4\sigma_X^2)$.

Theorem 4: If $X$ is a Log-normal R.V with distribution $\textit{X}\sim\textit{LogN}(\mu_X,\sigma_X^2)$, then
considering theorem 1,  $Y=mX$ is a Log-normal R.V with distribution defined as
 $\textit{Y}\sim\textit{LogN}(\mu_X+10\log(m),\sigma_X^2)$.

\subsection{Transmission with DF Protocol}
If we suppose that $\hat{h}_{s,r_l}$ and $\hat{h}_{r_l,d}$, the
estimations of the source-relay and relay-destination channels,
respectively are available at destination, in the first phase,
destination node equipped with selection combiner (SC), selects the
worst hop of each branch as\vspace{-0.5em}
\begin{align}
    \hat{\gamma}_{eq_l}=\min\{\hat{\gamma}_{s,r_l},\hat{\gamma}_{r_l,d}\}.\label{HOP SC}
\end{align}
In second phase, SC selects the branch with the best
$\hat{\gamma}_{eq_l}$ as
\begin{align}
   \hat{\gamma}_{SC}=\max\{\hat{\gamma}_{eq_1},\hat{\gamma}_{eq_2},\ldots,\hat{\gamma}_{eq_L}\}.\label{Bra SC}
\end{align}
Since the SC chooses the weakest part of each branch and then the best
one is selected to send the signal, the occurrence of outage is
equal to the case when the best weak link's SNR is under the
threshold ($\gamma_{th}$),
\begin{align}
    P_{out}=P( \hat{\gamma}_{SC}\leq\gamma_{th}).\label{OUT SC}
    \end{align}

Without loss of generality, we stipulate equal power allocation to
source and best-relay ($P_s = P_r =P$). By considering the pdf of Log-normal
R.V~\cite{M.K.SimonandM.S.Alouini;2005}, $\hat{h}$, with corresponding Normal parameters
defined as $\mu_{\hat{h}}$ and $\sigma_{\hat{h}}^2$, then by using theorems 3 and
4, respectively, after some elementary manipulations and dropping the indexes, the pdf of
$\hat{\gamma}=\frac{P}{{{N_0}}}{\left| \hat{h} \right|^2}=\bar{\gamma}{\left| \hat{h} \right|^2}$ is obtained as
\begin{align}
    &f_{\hat{\Upsilon}}(\hat{\gamma})=\frac{\xi}{\sqrt{2\pi}\sigma_{\hat{\Upsilon}} \hat{\gamma}} \exp\left[-\frac{(10\log_{10}\hat{\gamma}-\mu_{\hat{\Upsilon}})^2}{2\sigma_{{\hat{\Upsilon}}}^2}\right] \label{pdf log}    \\
  &\mu_{\hat{\Upsilon}}\!=\!10\log_{10}\!E(\hat{\gamma})\!\!-\!\!5\log_{10}\!\Psi(\hat{\gamma}), \quad\sigma_{\hat{\Upsilon}}\!=\!\frac{100}{\ln10}\Psi(\hat{\gamma}),
   \label{LogN_paramet}
    \end{align}

where, $\xi=\frac{10}{\ln10}$,
$\textit{$\mu_{\hat{\Upsilon}}$}=2\mu_{\hat{h}}+10\log\bar{\gamma}$,
$\textit{$\sigma_{\hat{\Upsilon}}^2$}=4\sigma_{\hat{h}}^2$,
$\Psi(\hat{\gamma})=\left( 1 + Var(\hat{\gamma})/E(\hat{\gamma})^2
\right)$. Relations \eqref{LogN_paramet} help us derive the
parameters of Log-normal distribution directly from the variable,
$\hat{\gamma}$. We also can express the CDF of $\hat{\gamma}$
as~\cite{Alouini:Com:2002}
\begin{align}
    F_{\hat{\Upsilon}}(\hat{\gamma})=Q\left(\frac{\mu_{\hat{\Upsilon}}-10\log_{10}\hat{\gamma}}{\sigma_{\hat{\Upsilon}}}\right),
    \quad \hat{\gamma}\geq0
     \label{LogN_SNR CDF}
    \end{align}
where,
$Q(x)=\frac{1}{\sqrt{2\pi}}\int_{x}^{\infty}e^{-\frac{u^2}{2}}du$ is
the standard one-dimensional Gaussian function. According to
independency of channel gains, the pdf of $\hat{\gamma}_{eq_l}$
in~\eqref{HOP SC} can be expressed as
\begin{align}
   F_{\hat{\Upsilon}_{eq_l}}(\hat{\gamma})&=
   P\left (\min\{\hat{\gamma}_{s,r_l},\hat{\gamma}_{r_l,d}\leq\hat{\gamma}\}\right )\nonumber\\
   &=1-(1-F_{\hat{\Upsilon}_{s,r_l}}(\hat{\gamma}))(1-F_{\hat{\Upsilon}_{r_l,d}}(\hat{\gamma})).\label{gamma eq}
    \end{align}

Substituting \eqref{LogN_SNR CDF} into \eqref{gamma eq}, yields
\begin{align}
   F_{\hat{\Upsilon}_{eq_l}}(\hat{\gamma})&=1-(1-Q\left(\Omega_{s,r_l}\right))(1-Q\left(\Omega_{r_l,d}\right)). \label{F gamma}
    \end{align}
where
$\Omega_{a,b}=(\mu_{\hat{\Upsilon}_{a,b}}-10\log_{10}\hat{\gamma})/\sigma_{\hat{\Upsilon}_{a,b}}$
for $a \in \{s,r_l\}$ and $b \in \{r_l,d\}$ . According to the fact
that our branches are independent and using \eqref{Bra SC},
$P_{out}$ is given by
\begin{align}
   P_{out}\!=\!P(\hat{\gamma}_{eq_1}\!\leq\!\gamma_{th})P(\hat{\gamma}_{eq_2}\!\leq\!\gamma_{th})\cdots\ P(\hat{\gamma}_{eq_L}\leq\gamma_{th}). \label{P out s}
    \end{align}
Substituting \eqref{F gamma} into \eqref{P out s} yields outage
probability for OR-DF scheme as
\begin{align}
   P_{out}=\prod_{l=1}^L \left( Q(\Omega_{s,r_l})\! +\! Q(\Omega_{r_l,d})\!-\! Q(\Omega_{s,r_l})Q(\Omega_{r_l,d})  \right)
   . \label{P out}
    \end{align}

\subsection{Transmission with AF Protocol}

In the OR-AF scenario, destination selects the maximum
harmonic mean of both S-R and R-D channel gains. As a result the
outage probability can be readily obtained as
\begin{align}
   P_{out}\!=\!P\left( \hat{\gamma}_{SC} \!\!=\! \max\{ \hat{\gamma}_{d,1}^{eff},\hat{\gamma}_{d,2}^{eff},\cdots,
   \hat{\gamma}_{d,L}^{eff} \}\leq \gamma_{th} \right), \label{P out AF}
    \end{align}
Similar to DF mode, using the independency of the branches, we arrive
at
\begin{align}
   P_{out}\!=\!P\!\left(\hat{\gamma}_{d,1}^{eff}\!\!\leq \!\!\gamma_{th}\right)\!\!P\!\!\left(\hat{\gamma}_{d,2}^{eff}\!\!\leq \!\!\gamma_{th}\right)\cdots
   \!P\!\!\left(\hat{\gamma}_{d,L}^{eff}\!\!\leq\!\! \gamma_{th}\right), \label{P out AF exp}
    \end{align}
which is the product of CDFs of $\hat{\gamma}_{d,l}^{eff},
\l=1,\cdots,L$. The CDF of $\hat{\gamma}_{d,l}^{eff}$ is given in
following preposition.

\begin{proposition} The pdf of the received instantaneous SNR at the destination
for OR-AF relaying protocol over imperfect non-identical Log-normal
fading channels is $\hat{\gamma}_{d,r_l}^{eff}\sim\textit{ LogN }(
-\mu_{\chi} + 10\log\frac{ \rho_{s,r_l}^2 \rho_{r_l,d}^2 } {
\lambda_{s,r_l} \lambda_{r_l,d} },\sigma_{\chi}^2 )$.
\end{proposition}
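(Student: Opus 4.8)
The plan is to turn the harmonic-mean expression \eqref{eqn:SNR D AF Short} into a sum of Log-normal variables by inverting it, and then invoke the standard Log-normal approximation for a sum of Log-normals. First I would record the ingredients: by Theorem 3 the squared magnitudes $|\hat{h}_{s,r_l}|^2$ and $|\hat{h}_{r_l,d}|^2$ are Log-normal, and by Theorem 4 multiplying by $P/N_0$ preserves Log-normality, so both hop SNRs $\hat{\gamma}_{s,r_l}$ and $\hat{\gamma}_{r_l,d}$ are independent Log-normal random variables with parameters obtained exactly as in \eqref{LogN_paramet}.

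Second, I would algebraically isolate the deterministic constant and invert the remaining fraction. Writing
\[
\hat{\gamma}_{d,l}^{eff} = \frac{\rho_{s,r_l}^2\rho_{r_l,d}^2}{\lambda_{s,r_l}\lambda_{r_l,d}}\,\chi^{-1},
\qquad
\chi \triangleq \frac{1}{\lambda_{s,r_l}\hat{\gamma}_{s,r_l}} + \frac{1}{\lambda_{r_l,d}\hat{\gamma}_{r_l,d}},
\]
one checks by clearing denominators that the right-hand side collapses back to \eqref{eqn:SNR D AF Short}. The purpose of this rearrangement is that each summand in $\chi$ is, by Theorem 2 (reciprocal of a Log-normal) combined with Theorem 4 (scaling by the deterministic factor $1/\lambda$), itself a Log-normal random variable; hence $\chi$ is a sum of two independent Log-normal variables.

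Third — and this is where the statement becomes an approximation rather than an exact identity — I would approximate the sum by a single Log-normal variable $\chi\sim\textit{LogN}(\mu_{\chi},\sigma_{\chi}^2)$. The parameters $\mu_{\chi},\sigma_{\chi}^2$ are fixed by moment matching: compute $E[\chi]$ and $\mathrm{Var}(\chi)$ from the moments of the two component variables (using their independence), then convert $(E[\chi],\mathrm{Var}(\chi))$ into Log-normal parameters through precisely the relations already set up in \eqref{LogN_paramet}, with $\Psi = 1 + \mathrm{Var}(\chi)/E[\chi]^2$. This closes the argument: applying Theorem 2 to $\chi^{-1}$ gives $\chi^{-1}\sim\textit{LogN}(-\mu_{\chi},\sigma_{\chi}^2)$, and applying Theorem 4 to multiply by the constant $\rho_{s,r_l}^2\rho_{r_l,d}^2/(\lambda_{s,r_l}\lambda_{r_l,d})$ shifts the mean by $10\log\big(\rho_{s,r_l}^2\rho_{r_l,d}^2/(\lambda_{s,r_l}\lambda_{r_l,d})\big)$ while leaving the variance unchanged, which is exactly the claimed distribution; the pdf then follows from the general Log-normal form \eqref{pdf log}.

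The main obstacle is this third step: the sum of independent Log-normal variables is not Log-normal, so the proposition holds only approximately, and its accuracy rests entirely on the moment-matching (Fenton--Wilkinson) step. I expect that to be the delicate point, since the approximation is sharpest when the variance $\sigma_{\chi}^2$ is small, which corresponds to the near-perfect-CSI regime $\rho\approx 1$ flagged in Section~\ref{sec:Model Estim}; outside that regime one anticipates exactly the analysis-versus-simulation mismatch mentioned in the abstract. The remaining work is the routine bookkeeping of expressing $\mu_{\chi}$ and $\sigma_{\chi}^2$ in closed form from the component means and variances.
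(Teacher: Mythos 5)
Your proposal is correct and follows essentially the same route as the paper: the same rewriting of \eqref{eqn:SNR D AF Short} as a constant times $\chi^{-1}$ with $\chi=\alpha+\beta$, the same identification of $\alpha,\beta$ as Log-normal via Theorems 2 and 4, the same Log-normal approximation of the sum, and the same final application of Theorems 2 and 1/4. Your ``moment matching'' step is exactly the Wilkinson method the paper invokes (its equations \eqref{eqn:wilki variance}--\eqref{eqn:wilki u2} are just the first- and second-moment matching conditions written out), and you correctly flag that this is the approximate, non-rigorous step and identify its regime of validity.
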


\begin{proof}We can rewrite~\eqref{eqn:SNR D AF Short} as
\begin{align}
   \hat{\gamma}_{d,r_l}^{eff} = \frac{ \rho_{s,r_l}^2 \rho_{r_l,d}^2 } { \lambda_{s,r_l} \lambda_{r_l,d} }
   \left( \frac{ 1 }{ \lambda_{s,r_l} \hat{\gamma}_{s,r_l} } +
   \frac{ 1 }{ \lambda_{r_l,d}
\hat{\gamma}_{r_l,d} } \right)^{-1}.
   \label{eqn:S_R_signal SNR eff af}
    \end{align}
In order to simplify the analysis we introduce a new random variable
$\chi$ given by the summation of two random variables as
$\chi=\alpha+\beta$, where $\alpha=\frac{ 1 }{ \lambda_{s,r_l}
\hat{\gamma}_{s,r_l} }$ and $\beta=\frac{ 1 }{ \lambda_{r_l,d}
\hat{\gamma}_{r_l,d} }$ . Remembering that the pdf of the S-R and
R-D channels are given by $h_{s,r_l}\sim\textit{ LogN }(
\mu_{s,r_l},\sigma_{s,r_l}^2 )$ and $h_{r_l,d}\sim\textit{ LogN }(
\mu_{r_l,d},\sigma_{r_l,d}^2 )$, it become obvious that both
$\alpha$ and $\beta$ have a Log-normal distribution as
$\alpha\sim\textit{ LogN }( -(2\mu_{s,r_l} +
10\log(\bar{\gamma}_{s,r_l}\lambda_{s,r_l}) ),4\sigma_{s,r_l}^2 )$ and $\beta\sim\textit{
LogN }( -(2\mu_{r_l,d} + 10\log(\bar{\gamma}_{r_l,d}\lambda_{r_l,d})),4\sigma_{r_l,d}^2
)$. Next we derive the pdf of $\chi$. For this purpose we employ the
Wilkinson method which is described in~\cite{Schwar:Bell:1982} as
follows:

\emph{Wilkinson method}: If $X_1,\cdots, X_N$ are $N$ Log-normal
R.Vs, then $Z=X_1+X_2+\cdots+X_N$, can be approximated by a
Log-normal R.V with the following parameters
\small
\begin{align}
   \mu_Z\!=\!\frac{1}{\lambda}\!\left(\!2\ln(u_1)\!-\!\frac{1}{2}\ln(u_2)\!\right)\!,\quad  \sigma_Z\!=\!\frac{1}{\lambda}\sqrt{ \ln(u_2)\!-\!2\ln(u_1)  }\label{eqn:wilki variance}
    \end{align}
\normalsize
\vspace{-0.3em}
\begin{align}
  & u_1= \sum_{i=1}^N e^{(\mu_{X_i} + \frac{\sigma_{X_i}^2}{2} )}= e^{(\mu_{Z} + \frac{\sigma_{Z}^2}{2} )}\label{eqn:wilki u1} \\
  & u_2= \sum_{i=1}^N e^{(2\mu_{X_i} + 2\sigma_{X_i}^2 )} + 2 \sum_{i=1}^{N-1}  \sum_{j=i+1}^N  e^{(\mu_{X_i} + \sigma_{X_i}^2  )}\times\nonumber\\
  &\quad  e^{ \frac{1}{2} ( \sigma_{X_i}^2 + \sigma_{X_j}^2 + 2r_{ij} \sigma_{X_i}^2 \sigma_{X_j}^2  )}=
    e^{(2\mu_{Z} + 2\sigma_{Z}^2 )},\label{eqn:wilki u2}
\end{align}
where $\lambda=\frac{\ln(10)}{10}$ and $r_{i,j}$ is the correlation coefficient between $X_i$ and $X_j$
which is defined as
\begin{align}
r_{i,j}=\frac{  E \{( X_i - \mu_{X_i} )( X_j - \mu_{X_j} )\}  }{ \sigma_{X_i} \sigma_{X_j} }.\label{eqn:coorelation coe}
    \end{align}
Replacing $\mu_{X_1}$, $\sigma_{X_1}$, $\mu_{X_2}$ and
$\sigma_{X_2}$ with $\mu_{\alpha}$, $\sigma_{\alpha}$, $\mu_{\beta}$
and $\sigma_{\beta}$ and using \eqref{eqn:wilki variance},
$\mu_{\chi}$ and $\sigma_{\chi}^2$ are obtained. Finally, using
theorem 2 and 1, respectively. It can be readily checked that
$\hat{\gamma}_{s,r_l}^{eff}$ has a Log-normal distribution as
$\hat{\gamma}_{s,r_l}^{eff}\sim\textit{ LogN }( -\mu_{\chi} +
10\log\frac{ \rho_{s,r_l}^2 \rho_{r_l,d}^2 } { \lambda_{s,r_l}
\lambda_{r_l,d} },\sigma_{\chi}^2 )$. As $N$ becomes large, the
central limit theorem states that the sum will become close to
Gaussian distribution.
\end{proof}

\begin{figure}[h]
\centering
\includegraphics[width=\onecolumnfigurewidth]{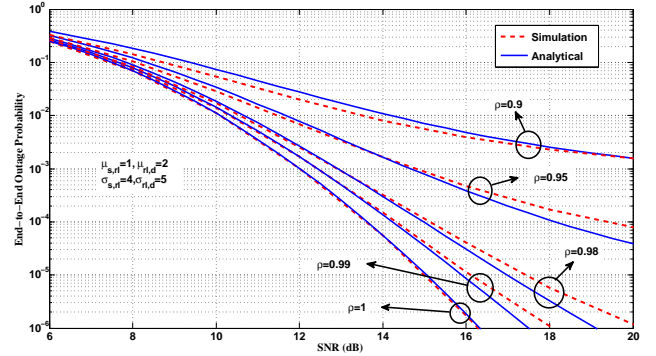}
\vspace{-1mm} \caption{ Outage probability for OR-DF mode over
imperfect Log-normal fading channels for different $\rho$.  }
\label{fig: OR-DF} \vspace{-2mm}
\end{figure}
\begin{figure}[h]
\centering
\includegraphics[width=\onecolumnfigurewidth]{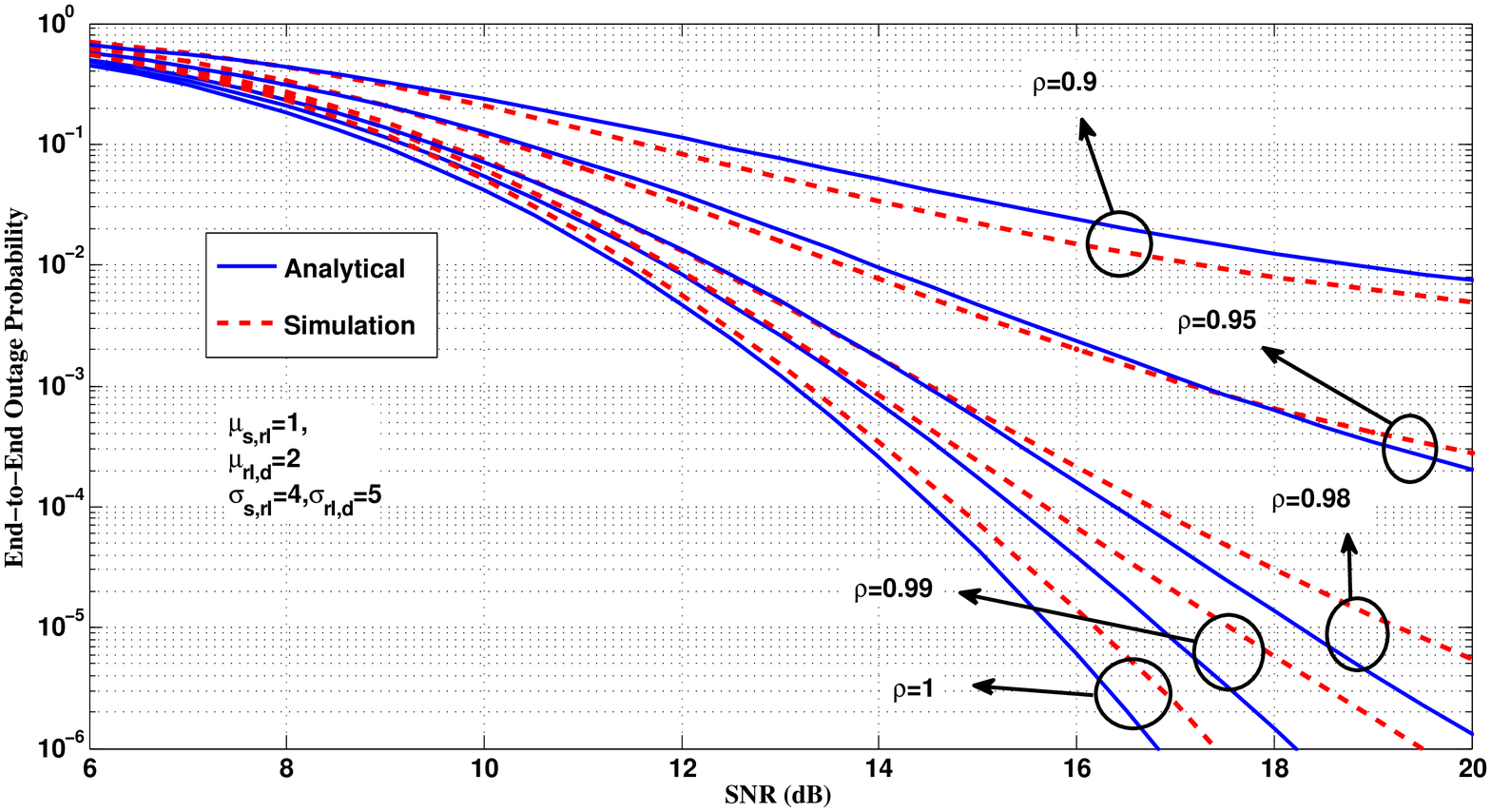}
\vspace{-1mm} \caption{ Outage probability for OR-AF mode over
imperfect Log-normal fading channels for different $\rho$.  }
\label{fig: OR-AF} \vspace{-4mm}
\end{figure}

Using \eqref{LogN_SNR CDF}, the outage probability of the
OR-AF relaying over imperfect non-identical fading
channels can be expressed as
\begin{align}
  P_{out}= \prod_{l=1}^L Q(\frac{\mu_{\hat{\gamma}_{d,r_l}^{eff}}-10\log_{10}\hat{\gamma}}{\sigma_{\hat{\gamma}_{d,r_l}^{eff}}}). \label{P out AF 3}
    \end{align}

\section{Numerical Results}
\label{sec:Num Res}
In order to justify our analytical results, we provide some
monte-carlo simulations in this section. In all simulations, unless
it mentioned otherwise, we have three relays ($L=3$),
$\gamma_{th}=3$,
and $\rho_{s,r_l}=\rho_{r_l,d}$. Log-normal fading channel
parameters and the correlation coefficient between the exact and
estimated channel, $\rho$, are given in each figure. We can
calculate the variance of  the each channel ($\sigma_h^2$)
using~\cite{Papoulis:1984}
\begin{align}
&\sigma _h^2 = \mu _X^2({10^{\sigma _X^2{\textstyle{{\ln 10} \over {100}}}}} - 1),\label{Lognormal var}
    \end{align}
    \normalsize
where, $\mu_X$ and $\sigma_X^2$ are the Log-normal fading parameters.

\begin{figure}[h]
\centering
\includegraphics[width=\onecolumnfigurewidth]{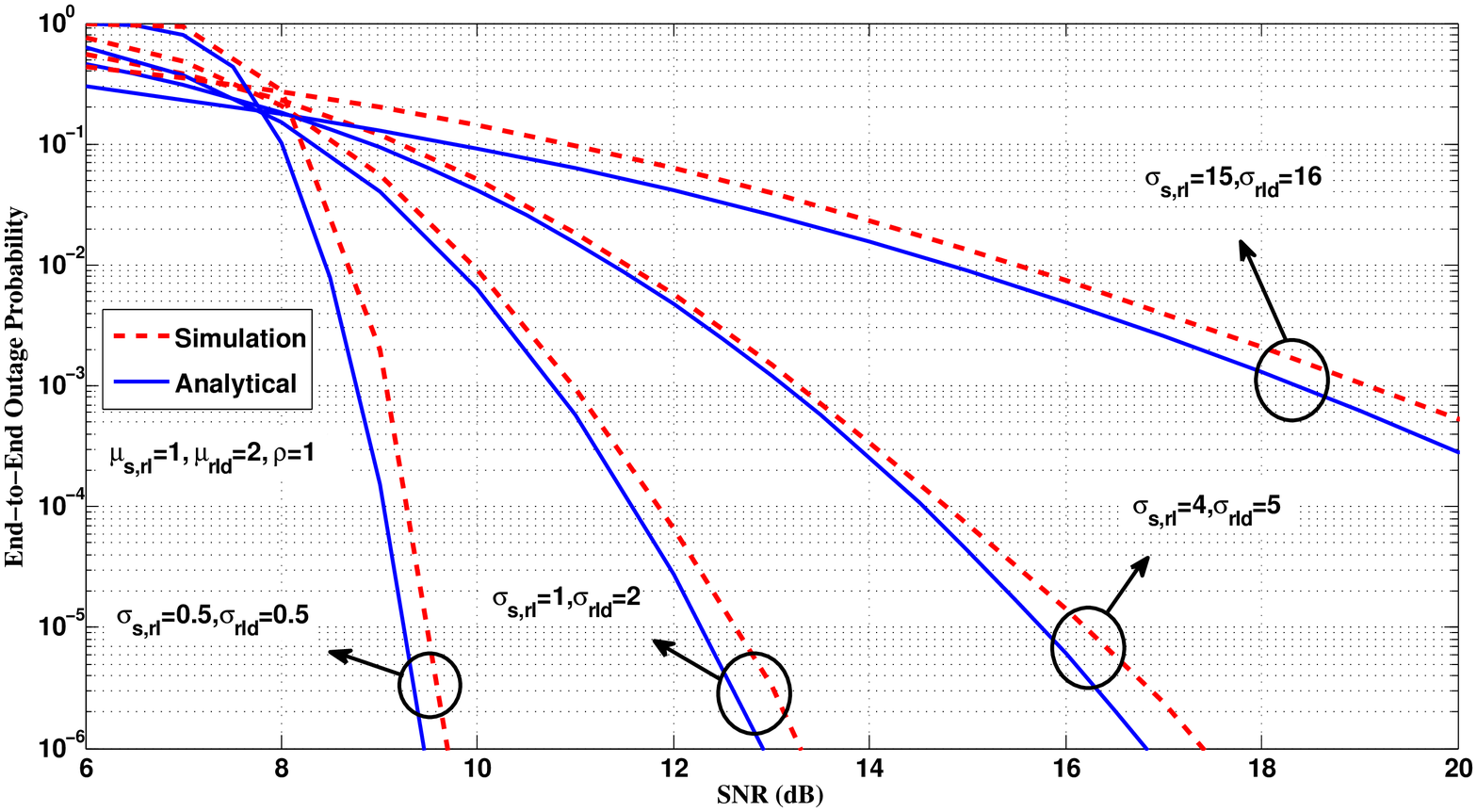}
\vspace{-1mm} \caption{ Comparing the accuracy of the analytical
outage probability derived for OR-AF using Wilkinson method with the
monte-carlo simulation.  } \label{fig: AF wilkinson} \vspace{-2mm}
\end{figure}

\begin{figure}[h]
\centering
\includegraphics[width=\onecolumnfigurewidth]{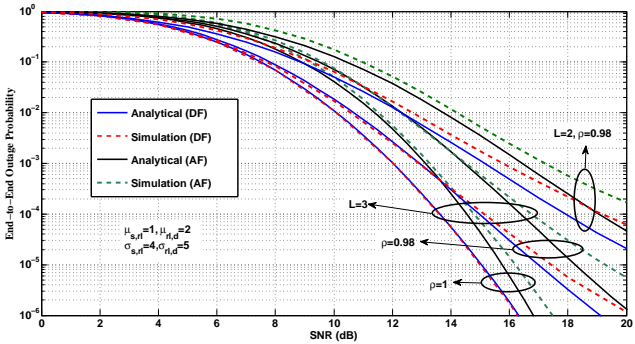}
\vspace{-1mm} \caption{ Comparison between analytical and simulated
outage probability of OR-DF and OR-AF scheme with imperfect CSI with
different number of relays. ($L=2$, $3$)  } \label{fig: AF_DF}
\vspace{-4mm}
\end{figure}
Fig.\ref{fig: OR-DF} and Fig.~\ref{fig: OR-AF} depict the effect of
channel estimation error on OR-DF and OR-AF protocol over Log-normal
fading channel, respectively. $\sigma_{l_1}$, $\mu_{l,1}$,
$\sigma_{l_2}$, $\mu_{l_2}$ are the parameters of the first and
second hop, respectively. Following conclusions are drawn from
Figs.~\ref{fig: OR-DF} and \ref{fig: OR-AF} :

\begin{enumerate}
\item As the $\rho$ decreases, or equivalently, the estimation error increases, the
performance become worse.
 \item As $\rho$ decreases the distance
between the simulation and analytical result increases.

\item Depending on $\rho$, in a specific SNR, the performance of the system
saturates, that is by increasing SNR, we do not get improvement in
the system performance.

\end{enumerate}

The second conclusion, is the outcome of the approximation that was
mentioned in section~\ref{sec:Model Estim}. However, simulations
show that approximating the estimated channel as a Log-normal R.V is
acceptable. We observe that for $\rho\simeq1$, the analytical result
is acceptable for good range of SNRs.

Fig.~\ref{fig: AF wilkinson} investigate the accuracy of Wilkinson
method with different variances for OR-AF system. As it is shown
in~\cite{Schwar:Bell:1982}, we see that the higher the variance, the
higher the difference between the parameters of the approximated pdf
according to Wilkinson method and the pdf according to simulation.

As it has been mentioned in~\cite{Hasna:WCOM:2003}, we can see that
in Fig.~\ref{fig: AF_DF} the DF outperforms the AF protocol in low
SNRs; however, they become close to each other in high SNRs. We can
also notice that the number of cooperating relays has a strong
impact on the performance enhancement, that is the achieved
diversity order is related to relay number, $L$.
\vspace{-0.5em}
\section{Conclusion}
\label{sec:con}
In this paper, end-to-end outage probability of a wireless
 communication system using best-worse and best-harmonic mean selection Over Imperfect Non-identical Log-normal Fading Channels has been investigated.
Channel estimation error has been considered as a Gaussian random
variable. Since the distribution of estimated channels are not
Log-normal either as would be in the case of the Rayleigh fading
channels, simulation results do not exactly follow the analytical
results. However, this difference is negligible for a wide variety
of situations. Harmonic mean of two Log-normal
 R.Vs in presence of channel estimation errors was also derived.

\section{Acknowledgements}
This work has been supported by Iran Telecommunication Research
Center (ITRC).


\end{document}